\documentclass[conference]{IEEEtran}
\usepackage[letterpaper, top=0.71in, left=0.625in, right=0.625in, bottom=1in]{geometry}
\IEEEoverridecommandlockouts

\usepackage{cite}

\usepackage{amsmath,amssymb,amsfonts,amsthm,mathtools}

\usepackage{graphicx}
\usepackage{epstopdf}
\usepackage{pgfplots}
\pgfplotsset{compat=1.18}

\usepackage{float}
\usepackage{stfloats} 

\ifCLASSOPTIONcompsoc
  \usepackage[caption=false,font=normalsize,labelfont=sf,textfont=sf]{subfig}
\else
  \usepackage[caption=false,font=footnotesize]{subfig}
\fi

\usepackage[ruled,vlined,linesnumbered]{algorithm2e}

\usepackage{booktabs}
\usepackage{tabularx}
\usepackage{multirow}
\usepackage{array}
\usepackage{diagbox}

\usepackage{tikz}
\usetikzlibrary{arrows,shapes,positioning}

\usepackage{enumitem}

\usepackage{bbm}   

\newcommand{\RN}[1]{\textup{\uppercase\expandafter{\romannumeral#1}}}

\usepackage{titlesec}
\newtheorem{Proposition}{Proposition}
\definecolor{note_fontcolor}{rgb}{0, 0.667969, 1}

\definecolor{editcolor}{rgb}{0.8,0,0} 

\begin{document}

\setlength{\textfloatsep}{2pt plus 3pt minus 2pt}
\setlength{\abovecaptionskip}{2pt plus 3pt minus 2pt} 
\setlength{\belowcaptionskip}{2pt plus 3pt minus 2pt} 
\setlength{\belowdisplayskip}{3pt plus 3pt minus 2pt}
\setlength{\belowdisplayshortskip}{5pt plus 3pt minus 2pt}
\setlength{\abovedisplayskip}{3pt plus 3pt minus 2pt}
\setlength{\abovedisplayshortskip}{5pt plus 3pt minus 2pt}
\setlength{\skip\footins}{8pt plus 2pt minus 1pt}

\title{Secure Rate-Splitting and RIS Beamforming with Untrusted Energy Harvesting Receivers}

\author{\IEEEauthorblockN{Hamid Reza Hashempour\IEEEauthorrefmark{1}, Le-Nam Tran\IEEEauthorrefmark{2}, Duy H. N. Nguyen\IEEEauthorrefmark{3}, and Hien~Quoc~Ngo\IEEEauthorrefmark{1}}

\IEEEauthorblockA{
\IEEEauthorrefmark{1} Centre for Wireless Innovation (CWI), Queen’s University Belfast, U.K. \\
\IEEEauthorrefmark{2} School of Electrical and Electronic Engineering,
University College Dublin, Dublin, Ireland. \\
\IEEEauthorrefmark{3} Department of Electrical and Computer Engineering, San
Diego State University, San Diego, CA 92182, USA. \\
\{h.hashempoor, hien.ngo\}@qub.ac.uk, nam.tran@ucd.ie, duy.nguyen@sdsu.edu}}

\maketitle

\begin{abstract}
We consider a reconfigurable intelligent surface (RIS)-assisted heterogeneous network comprising legitimate information-harvesting receivers (IHRs) and untrusted energy-harvesting receivers (UEHRs). A multi-antenna base station (BS) transmits confidential information to IHRs while ensuring sufficient energy transfer to UEHRs that may attempt eavesdropping. To enhance physical-layer security, we propose a secure rate-splitting multiple access (RSMA) scheme aided by a UAV-mounted RIS. The objective is to maximize fairness-based secrecy energy efficiency (SEE). Owing to the non-convexity of the formulated problem, we develop an alternating optimization framework that jointly designs the common message allocation, active precoders, and RIS phase shifts under transmit power and energy harvesting constraints, leveraging sequential convex approximation (SCA). Simulation results demonstrate the scalability of the proposed algorithm and its superior SEE performance compared to space-division multiple access (SDMA) and non-orthogonal multiple access (NOMA) benchmarks.
\let\thefootnote\relax\footnotetext{ 
The work of H. R. Hashempour and H. Q. Ngo was supported by a research grant from the Department for the Economy Northern Ireland under the US-Ireland R\&D Partnership Programme.}
\end{abstract}

\begin{IEEEkeywords}
Rate-Splitting, secrecy energy efficiency, beamforming, reconfigurable intelligent surfaces.
\end{IEEEkeywords}

\IEEEpeerreviewmaketitle

\section{Introduction}\label{intro}
Unmanned aerial vehicles (UAVs) and reconfigurable intelligent surfaces (RISs) have emerged as promising enablers for coverage extension, energy-efficient operation, and physical-layer security in beyond-5G networks \cite{Survey,Hashempour0}. By repositioning UAV-mounted RISs, networks can establish favorable cascaded links in blockage-rich deployments, thereby strengthening desired signals and mitigating interference. Recent studies have highlighted both the opportunities and challenges of securing UAV–RIS systems. For instance, \cite{Mughal} employs deep machine learning to address malicious threats in UAV-assisted RIS networks, while \cite{Li-Renzo} investigate RIS-assisted secure communication where a UAV communicates with a ground user under the threat of eavesdropping, using time division multiple access (TDMA). Similarly, \cite{Shang} proposes a RIS-assisted secure UAV communication scheme in the presence of active jammers and passive eavesdroppers. 
In parallel, rate-splitting multiple access (RSMA) has been recognized as a robust strategy that enhances spectral efficiency and interference management by splitting each user’s message into common and private parts \cite{Bastami}. Its flexibility makes RSMA particularly attractive for security-constrained wireless systems \cite{Bastami,Hashempour1,Hashempour}.

Moreover, radio frequency (RF) signals inherently deliver both information and energy, motivating research on simultaneous wireless information and power transfer (SWIPT). RIS-assisted SWIPT systems have been studied for energy harvesting (EH) receivers \cite{Zhang}, while secure SWIPT has drawn attention due to potential eavesdropping by untrusted EH receivers. For instance, \cite{Ouyang} proposes artificial-noise-aided beamforming for full-duplex relays with untrusted EH receivers, and \cite{Mohammadi} studies RIS-assisted SWIPT in IoT networks. Recently, reinforcement learning has also been applied to maximize secrecy energy efficiency (SEE) in RIS-assisted networks \cite{Niyato}.

Motivated by these works, we study a secure multiuser multiple-input
single-output (MISO) downlink, where a base station (BS) serves multiple legitimate information-harvesting receivers (IHRs) in the presence of untrusted energy-harvesting receivers (UEHRs). To maximize fairness-based SEE while satisfying nonlinear EH constraints at UEHRs, we consider a RIS-assisted RSMA framework and jointly optimize the RSMA precoding, common message parameters, and RIS phase shifts. To the best of our knowledge, this joint design for SEE in RSMA-SWIPT systems has not been addressed before. The resulting non-convex problem is solved via an alternating optimization (AO) algorithm, where the common rate allocation is obtained through linear programming, and the BS precoders and RIS phase shifts via sequential convex approximation (SCA). Due to page limitations, the UAV-mounted RIS is assumed to be deployed at an effective location, and UAV positioning optimization is not considered in this work.
The main contributions of this work are summarized as follows:
\begin{itemize}
\item We study a secure RIS-assisted multiuser MISO downlink with SWIPT, where UEHRs can act as eavesdroppers, and fairness is ensured via a max--min SEE formulation.
\item We propose an efficient AO-based solution using Dinkelbach’s transform and SCA to jointly optimize RSMA precoding, rate allocation, and RIS phase shifts under nonlinear EH constraints.
\item Numerical results demonstrate that RSMA provides better robustness and scalability than SDMA and NOMA across various system settings.
\end{itemize}

The rest of the paper is organized as follows: Section~\ref{Sys_Model} presents the system model and RSMA design. Section~\ref{Problem-formulation} formulates the fairness SEE problem and details the AO framework. Section~\ref{Simulat} provides numerical results, and Section~\ref{conc} concludes the paper.

\section{System Model}\label{Sys_Model}
Consider a downlink wireless communication system consisting of one UAV-assisted RIS, $J$ non-colluding UEHRs, $K$ legitimate users, and one BS. The set of $K$ IHRs and $J$ UEHRs are denoted by $\mathcal{K} = \{1, \dots, K\}$ and $\mathcal{J} = \{1, \dots, J\}$, respectively. Due to the presence of high obstacles, there exists no line-of-sight (LoS) channel between the BS and the users, as well as between the BS and the UEHRs.
The UAV, equipped with a RIS, serves as a passive relay to assist communication between the BS and the users. The RIS is composed of $M$ reflecting elements, and the phase shift of each element can be controlled by the UAV.
The BS is equipped with $N_t$ antennas.
All UEHRs, IHRs, and the BS are located on the ground. The horizontal coordinates of user $k \in \mathcal{K}$, UEHR $j \in \mathcal{J}$, and the BS are denoted by $\mathbf{w}^I_k = [x_k^I, y_k^I]^T$, $\mathbf{w}^U_j = [x_j^U, y_j^U]^T$, and $\mathbf{w}_b = [x_b, y_b]^T$, respectively. 
The UAV flies at a fixed altitude $H$ and hovers to enhance communication. The system model is demonstrated in Fig. \ref{fig1}.

\begin{figure} 
	\centering
\includegraphics[width=.83\linewidth]{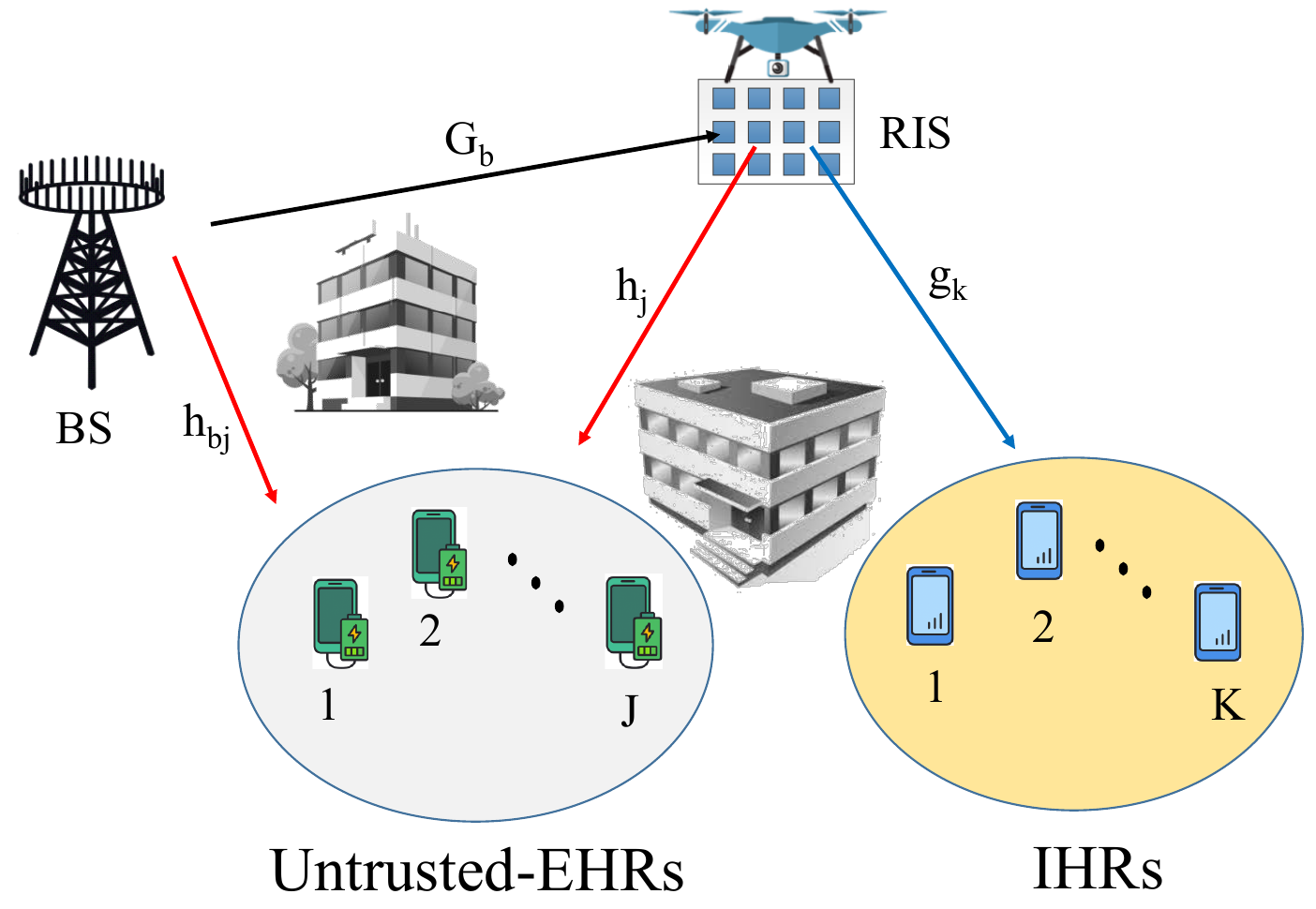}
	\caption{System model of UAV-assisted RIS-RSMA network.}
	\label{fig1}		
\end{figure}
The UAV horizontal coordinate is denoted by $\mathbf{q} = [x,y]^T$. 
The channel between the UAV and user $k$ is
\begin{equation}
\mathbf{g}_k = \tilde{\mathbf{g}}_k\, d_k^{-\frac{\alpha_k}{2}}, \quad 
\tilde{\mathbf{g}}_k = \sqrt{\tfrac{K_k}{K_k+1}}\,\mathbf{g}^{\mathrm{LoS}}_{k}
+ \sqrt{\tfrac{1}{K_k+1}}\,\hat{\mathbf{g}}_{k},
\end{equation}
where $\alpha_k\geq 2$ is the pathloss exponent, $\hat{\mathbf{g}}_{k} \sim \mathcal{CN}(\mathbf{0},\mathbf{I}_M)$, $K_k$ the Rician $K$-factor 
($K_k=0$ for Rayleigh fading), and $d_k=\sqrt{\|\mathbf{q}-\mathbf{w}^I_k\|^2+H^2}$. 
Similarly, the UAV–UEHR $j$ channel is
\begin{equation}
\mathbf{h}_j = \tilde{\mathbf{h}}_j\, d_j^{-\frac{\alpha_j}{2}}, \quad 
\tilde{\mathbf{h}}_j = \sqrt{\tfrac{K_j}{K_j+1}}\,\mathbf{h}^{\mathrm{LoS}}_{j}
+ \sqrt{\tfrac{1}{K_j+1}}\,\hat{\mathbf{h}}_{j},
\end{equation}
where $\hat{\mathbf{h}}_{j}\sim \mathcal{CN}(\mathbf{0},\mathbf{I}_M)$, $K_j$ is the Rician factor 
($K_j=0$ for Rayleigh fading), and $d_j=\sqrt{\|\mathbf{q}-\mathbf{w}^U_j\|^2+H^2}$.

We define the RIS phase shift vector  as $\boldsymbol{s} = [s_1, \dots, s_M]^T$, where each element is given by $s_m = e^{j \theta_m}$ for all $m \in \{1, \dots, M\}$.
The reflection-coefficient matrix of the RIS is then expressed as
\begin{equation}
\boldsymbol{\Theta} = \mathrm{diag}\left(s_1, \dots, s_M\right) \in \mathbb{C}^{M \times M}, \label{eq:reflection_matrix}
\end{equation}
where $\theta_m \in [0, 2\pi]$ denotes the phase shift applied by the $m$-th reflecting element.
The channel matrix between the BS and the RIS is denoted by $\mathbf{G}_b \in \mathbb{C}^{M \times N_t}$, where  $[\mathbf{G}_b]_{m,n}$ represents the channel coefficient between the 
$n$th transmit antenna at the BS and the 
$m$th reflecting element of the RIS.
The distance between the BS and the UAV is
\begin{equation}
d_b = \sqrt{\|\mathbf{q} - \mathbf{w}_b\|^2 + H^2}.
\end{equation}

As a worst-case assumption for secrecy analysis, we also consider a direct link between the BS and each of the UEHRs. 
The corresponding channel vector between the BS and the $j$-th UEHR is denoted by $\mathbf{h}_{bj} \in \mathbb{C}^{N_t \times 1}$ for all $j \in \mathcal{J}$, where small-scale fading follows the same Rician model as above, with $K_{bj}=0$ for NLoS. 
For simplicity, we assume a common path loss exponent for all air-to-ground links, i.e., $\alpha_b = \alpha_k = \alpha_j = \alpha_{\text{air}} = \alpha$.

To concurrently serve $K$ users, we adopt the RSMA technique throughout this paper. Accordingly, the transmitted signal $\mathbf{x} \in \mathbb{C}^{N_t \times 1}$ at the BS  is formulated as
\begin{equation}
\mathbf{x} = \mathbf{p}^c s_c + \sum_{k=1}^{K} \mathbf{p}_k^p s_k^p, \label{eq:tx_signal}
\end{equation}
where $\mathbf{p}^c$ denotes the precoding vector for the common stream and $\mathbf{p}_k^p$ is the precoding vector for the private stream of user $k$ . The symbols $s_c$ and $s_k^p$ represent the common and private data symbols, respectively.
Based on this transmission, the received signals at a legitimate user $k \in \mathcal{K}$ and at the $j$-th  UEHR, $j \in \mathcal{J}$,  are respectively given by:
\begin{align}
y_k &= \mathbf{g}_k^H \boldsymbol{\Theta} \mathbf{G}_b \mathbf{x} + n_k, \nonumber \\
y_j &= \left( \mathbf{h}_{bj}^H + \mathbf{h}_j^H \boldsymbol{\Theta} \mathbf{G}_b \right) \mathbf{x} + n_j, \label{eq:received_signals}
\end{align}
where $n_k$ and $n_j$ denote the additive white Gaussian noise (AWGN) at user $k$ and UEHR $j$, respectively.
Based on \eqref{eq:received_signals}, the received power at UEHR $j \in \mathcal{J}$ is expressed as
\begin{align}
&P_j^{\mathrm{EH}} = \left\vert \left( \mathbf{h}_{bj}^H + \mathbf{h}_j^H \boldsymbol{\Theta} \mathbf{G}_b \right) \mathbf{p}^c \right \vert^2 \nonumber \\
&\quad + \sum_{k=1}^{K} \left\vert \left( \mathbf{h}_{bj}^H + \mathbf{h}_j^H \boldsymbol{\Theta} \mathbf{G}_b \right) \mathbf{p}_k^p \right\vert^2. \label{eq:received_power}
\end{align}
Accordingly, the harvested power at UEHR $j$ can be modeled as \cite{Boshkovska}
\begin{equation}
\Omega(P_j^{\mathrm{EH}}) = \frac{\phi}{k'_1(1 + \exp(-b_0(P_j^{\mathrm{EH}} - b_1)))} - k'_2, \label{eq:harvested_power}
\end{equation}
where $\Omega(\cdot)$ is a logistic function representing the non-linear energy harvesting model, and $\phi$, $k'_1$, $k'_2$, $b_0$, $b_1$ are positive constants.
According to RSMA principles, the signal-to-interference-plus-noise ratio (SINR) of the common stream received by user $k \in \mathcal{K}$ and UEHR acting as an eavesdropper $j \in \mathcal{J}$ can be expressed as
\begin{align}
\gamma_k^{c} &= \frac{|\mathbf{g}_k^H \boldsymbol{\Theta} \mathbf{G}_b \mathbf{p}^c|^2}
{\sum\limits_{l=1}^{K} |\mathbf{g}_k^H \boldsymbol{\Theta} \mathbf{G}_b \mathbf{p}_l^p|^2 + \sigma^2}, \label{eq13} \\
\gamma_j^{c,e} &= \frac{|(\mathbf{h}_{bj}^H + \mathbf{h}_j^H \boldsymbol{\Theta} \mathbf{G}_b) \mathbf{p}^c|^2}
{\sum\limits_{k=1}^{K} |(\mathbf{h}_{bj}^H + \mathbf{h}_j^H \boldsymbol{\Theta} \mathbf{G}_b) \mathbf{p}_k^p|^2 + \sigma^2}. \label{eq14}
\end{align}
where $\sigma^2$ is the noise power. Similarly, for the private streams $s_k^p$, $k \in \mathcal{K}$, the SINR at user $k$ for decoding its own private stream is given by
\begin{align}
\gamma_k^{p}& = \frac{|\mathbf{g}_k^H \boldsymbol{\Theta} \mathbf{G}_b \mathbf{p}_k^p|^2}
{\sum\limits_{l \neq k} |\mathbf{g}_k^H \boldsymbol{\Theta} \mathbf{G}_b \mathbf{p}_l^p|^2 + \sigma^2}. \label{eq15}
\end{align}
We employ RSMA such that the UEHRs, regarded as potential eavesdroppers, are unable to decode the common stream. Furthermore, we consider the scenario where UEHR $j \in \mathcal{J}$ attempts to decode the private stream of user $k$. The corresponding SINR is expressed as
\begin{align}
    \gamma_{j,k}^{p,e} = \dfrac{|\mathbf{u}_j^H \mathbf{p}_k^p|^2}
{ |\mathbf{u}_j^H \mathbf{p}^c|^2 + \sum\limits_{l \neq k} |\mathbf{u}_j^H \mathbf{p}_l^p|^2 + \sigma^2 }. \label{eq:private_sinr_uehr}
\end{align}
where $\mathbf{u}_j$ denotes the combined channel between the BS, RIS, and UEHR $j$, defined as
$\mathbf{u}_j^H \triangleq \mathbf{h}_{bj}^H + \mathbf{h}_j^H \boldsymbol{\Theta} \mathbf{G}_b$.

The transmission rate $r_c$ of the common stream at the BS should be designed such that all legitimate users are able to decode it successfully, while also ensuring robustness against the UEHRs. Therefore, the rate $r_c$ must satisfy the following condition
\begin{equation}
 \log_2(1 + \max_{j \in \mathcal{J}}\gamma_j^{c,e})  \leq  r_c \leq \log_2(1 +\min_{k \in \mathcal{K}} \gamma_k^{c}). \label{eq:common_rate_constraint}
\end{equation}
By defining $\gamma^{c}_E \triangleq \max_{j \in \mathcal{J}} \gamma_j^{c,e}$ and 
$\gamma_{E}^{p,k} \triangleq \max_{j \in \mathcal{J}} \gamma_{j,k}^{p,e}$, 
the secrecy rate of user $k \in \mathcal{K}$ can be expressed as
\begin{align}
R_{k}^{\text{sec}}=a_{k}r_{c}^{\text{sec}}+r_{p,k}^{\text{sec}},\label{eq:secrecy_rate}
\end{align}
where 
\begin{align}
r_{c}^{\text{sec}} & =\left[\log_{2}(1+\gamma^{c})-\log_{2}(1+\gamma_{E}^{c})\right]^{+}\\
r_{p,k}^{\text{sec}} & =\left[\log_{2}(1+\gamma_{k}^{p})-\log_{2}(1+\gamma_{E}^{p,k})\right]^{+}
\end{align}
where $[x]^+ \triangleq \max(x, 0)$ and $\gamma^{c} \triangleq \min_{k \in \mathcal{K}} \gamma_k^{c}$.
Notably, $a_k$ adjusts the portion of the common message allocated to each user $k$, 
where $\sum_{k=1}^K a_k = 1$.

\section{Problem Formulation}\label{Problem-formulation}
To ensure perfect secrecy for all users, we define the minimum secrecy rate among all users as
\begin{equation}
R^{\text{sec}} = \min\nolimits_{k \in \mathcal{K}} R_k^{\text{sec}}. \label{eq:min_secrecy_rate}
\end{equation}
where $R^{\text{sec}}$ denotes the overall secrecy rate of the system.
Our goal is to maximize the SEE by jointly optimizing transmit precoders and passive beamforming. The problem can be mathematically formulated as
\begin{subequations} 
	\allowdisplaybreaks
\label{P1}
\begin{align}
\max_{\mathbf{a}, \mathbf{P}, \boldsymbol{\theta}} \quad & \frac{ R^{\text{sec}}}
{\varrho \text{Tr}(\mathbf{P} \mathbf{P}^H)+ P_0} \label{19a} \\
\text{s.t.} \quad 
& r_c \leq \min_{k \in \mathcal{K}} \log_2(1 + \gamma_k^{c}), \label{19c} \\
& \max_{j \in \mathcal{J}} \log_2(1 + \gamma_j^{c,e}) \leq r_c, \label{19d} \\
& \sum_{j \in \mathcal{J}} P_j^{\mathrm{EH}} \geq \Omega^{-1}(E_h),  \label{19e} \\
& \sum_{k=1}^K a_k = 1, \label{19g} \\
& 0 \leq a_k \leq 1, \quad \forall k \in \mathcal{K},  \label{19h} \\
& \text{Tr}(\mathbf{P} \mathbf{P}^H) \leq P_{\max},  \label{19i} \\
& \theta_m \in [0, 2\pi], \quad \forall m \in \{1, \dots, M\}, \label{19k}
\end{align}
\end{subequations}
where  $\mathbf{a} = \{ a_k \}_{k=1}^{K}$, $\mathbf{P} =  [\mathbf{p}^c, \mathbf{p}_1^p, \dots, \mathbf{p}_K^p]^T $, and $\boldsymbol{\theta} = \{ \theta_m \}_{m=1}^{M}$.
Moreover, $E_h$ denotes the minimum harvested energy requirement at the set of UEHRs, $\varrho$ is the reciprocal of the
drain efficiency of the power amplifier at the BS \cite{Hien}, and $P_0$ represents the circuit power consumption. The inverse function $\Omega^{-1}(x)$ is defined as:
\begin{equation}
\Omega^{-1}(x) = b_1 - \dfrac{1}{b_0} \ln\left( \frac{\phi}{k'_1(x + k'_2) }- 1  \right). \label{omegainv}
\end{equation}
The optimization problem in \eqref{P1} is nonconvex, making the global optimum difficult to obtain. To address this, we decompose it into subproblems and apply the SCA method.

\subsection{Optimal Allocation of Common Message Among IHRs}
For tractability, the problem in \eqref{P1} is solved using the block coordinate descent (BCD) method. Given the precoding matrix $\mathbf{P}$ and RIS phase-shift matrix $\boldsymbol{\Theta}$, the optimal allocation of the common message among the IHRs is determined by solving the following subproblem
\begin{subequations} 
	\allowdisplaybreaks
\label{eq:subproblem_allocation}
\begin{align}
\max_{\zeta, \mathbf{a}} \quad & \zeta \label{eq:subprob_obj} \\
\text{s.t.} \quad 
& \zeta \leq a_{k}r_{c}^{\text{sec}}+r_{p,k}^{\text{sec}},\ \forall k \in \mathcal{K}, \label{zeta-const.}\\
&  \eqref{19g}, \eqref{19h}. \label{eq:subprob_constraints}
\end{align}
\end{subequations}
where $\zeta$ is an auxiliary variable introduced to represent a lower bound of the minimum secrecy rate  $R^{\text{sec}}$. Problem \eqref{eq:subproblem_allocation} is a standard linear programming (LP) problem that can be efficiently solved using off-the-shelf convex optimization solvers.

\subsection{Optimization of Common and Private Precoders}
For simplicity of notation, we set 
$\varrho=1$ without loss of generality. The resulting subproblem becomes
\begin{subequations} 
	\allowdisplaybreaks
\label{eq:precoder_problem}
\begin{align}
\max_{\zeta, \mathbf{P}} \quad & \frac{\zeta}{\text{Tr}(\mathbf{P} \mathbf{P}^H)+ P_0}  \label{eq:precoder_obj} \\
\text{s.t.} \quad 
& \eqref{zeta-const.}, \eqref{19c}, \eqref{19d}, \eqref{19e}, \eqref{19i}. \label{eq:precoder_cons}
\end{align}
\end{subequations}
Problem~\eqref{eq:precoder_problem} is a nonlinear fractional program. To address its non-convexity, we apply Dinkelbach’s algorithm \cite{Dinkelbach}, introducing an auxiliary variable $\lambda$ to transform the objective into a more tractable form. The resulting problem is approximated as
\begin{subequations} 
\allowdisplaybreaks
\label{eq:precoder_transformed}
\begin{align}
\max_{\zeta, \mathbf{P}} \quad & \zeta - \lambda \left[ \text{Tr}(\mathbf{P} \mathbf{P}^H) + P_0 \right] \label{eq:precoder_lagrange} \\
\text{s.t.} \quad 
& \eqref{zeta-const.}, \eqref{19c}, \eqref{19d}, \eqref{19e}, \eqref{19i}. \label{eq:precoder_lagrange_cons}
\end{align}
\end{subequations}
The optimal value of $\lambda$ can be iteratively updated using Algorithm~\ref{Alg1} which is described later. However, Problem~\eqref{eq:precoder_transformed} remains nonconvex due to the nonconvex constraints in \eqref{eq:precoder_lagrange_cons}. To address this, we adopt an iterative approach that approximates the nonconvex terms using their first-order Taylor expansions at each iteration. In the following, we analyze each nonconvex term individually and derive its convex approximation.

We reformulate the non-convex constraints in \eqref{zeta-const.} using the epigraph technique. To this end, we define two sets of auxiliary variables
$\left\{ f_E^c, f_{E}^{p,k} \right\}$, and $\left\{ \rho^c, \rho_k^p\right\}$.

As a result, the secrecy constraint \eqref{zeta-const.} can be equivalently rewritten in terms of these auxiliary variables to enable convex optimization.
Using auxiliary variables and the first-order Taylor approximation, constraint~\eqref{zeta-const.} for all $k \in \mathcal{K}$ can be equivalently expressed as
\begin{subequations}\label{29}
	\allowdisplaybreaks
         \begin{align}
	&\zeta \leq a_k [\log_2(1+\rho^{c})-f_{E}^{c}]+[\log_2(1+\rho_{k}^{p})-f_{E}^{p,k}], \label{29a}
	\\&
 \log_2(1+\rho_{k}^{p})\geq f_E^{p,k},\label{29d}\\&
\log_2(1+\rho^{c}) \geq f_E^c, \label{29e}
 \\&
\rho^c \leq \min_{k \in \mathcal{K}} \gamma_k^{c},\label{29f}
\\&
\rho_k^p \leq \gamma_k^{p},\label{29g}
\\&
f_{E}^{c} \ge\log_2(1+\max_{j\in\mathcal{J}}\gamma_{j}^{c,e}),\label{29j}\\&
f_{E}^{p,k} \ge\log_2(1+\max_{j\in\mathcal{J}}\gamma_{j,k}^{p,e}).\label{29k}
\end{align} 
\end{subequations}
\begin{Proposition}
The affine approximation of constraint \eqref{29f} and  \eqref{29g}, $ \forall k \in \mathcal{K}$ are given by:
\begin{align}\label{27}
&\sum\limits_{\ell=1}^{K} | \mathbf{v}^H \mathbf{p}_\ell^p|^2 + \sigma^2 - \Psi^{(t)}(\mathbf{p}^{c},\rho^{c};\mathbf{v}) \leq 0\\ \label{28}
&\sum\limits_{\ell \neq k} | \mathbf{v}^H \mathbf{p}_\ell^p|^2 + \sigma^2 - \Psi^{(t)}(\mathbf{p}^p_k, \rho^p_k; \mathbf{v}) \leq 0,
\end{align}
where $\mathbf{v}^H \triangleq \mathbf{g}_k^H \boldsymbol{\Theta} \mathbf{G}_b$ and
$\Psi^{(t)}(\mathbf{u},x;\mathbf{h}) \triangleq 
\dfrac{2 \,\mathfrak{Re} \left\{ ( \mathbf{u}^{(t-1)})^H \mathbf{h} \mathbf{h}^H \mathbf{u} \right\} }{x^{(t-1)}} \;-\; \dfrac{\left | \mathbf{h}^H \mathbf{u}^{(t-1)} \right |^2 \, x}{\big(x^{(t-1)}\big)^2}$ is
the first-order Taylor approximations of  
$\dfrac{| \mathbf{h}^H  \mathbf{u}|^2}{x}$.
\end{Proposition}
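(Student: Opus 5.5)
The plan is to rewrite each SINR constraint so that the only nonconvex part is a quadratic-over-linear term, and then replace that term by its first-order Taylor expansion. The key fact is that $f(\mathbf{u},x)=|\mathbf{h}^H\mathbf{u}|^2/x$ is jointly convex on $\mathbb{C}^{N_t}\times\mathbb{R}_{++}$, so its linearization is a global affine minorant.

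\textbf{Step 1: reformulation.} Start from \eqref{29f}. Since the minimum is taken over $k$, $\rho^c\le\min_k\gamma_k^c$ is equivalent to $\rho^c\le\gamma_k^c$ for every $k\in\mathcal{K}$. For fixed $k$, write $\mathbf{v}^H=\mathbf{g}_k^H\boldsymbol{\Theta}\mathbf{G}_b$ and assume $\rho^c>0$, which is without loss of generality because $\rho^c=0$ gives a trivially feasible, zero-secrecy point. Using \eqref{eq13}, the constraint becomes
\[
\sum_{\ell=1}^K|\mathbf{v}^H\mathbf{p}_\ell^p|^2+\sigma^2\le \frac{|\mathbf{v}^H\mathbf{p}^c|^2}{\rho^c}.
\]
In the same way, \eqref{29g} with \eqref{eq15} becomes
\[
\sum_{\ell\ne k}|\mathbf{v}^H\mathbf{p}_\ell^p|^2+\sigma^2\le |\mathbf{v}^H\mathbf{p}_k^p|^2/\rho_k^p.
\]
In both constraints the left-hand side is convex in $\mathbf{P}$. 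The right-hand side is a quadratic-over-linear function, which is jointly convex in $(\mathbf{u},x)$ for $x>0$. Each constraint therefore has the form convex $\le$ convex, which is a DC structure.

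\textbf{Step 2: linearization.} Treating $\mathbf{u}$ and $\mathbf{u}^*$ as independent variables (Wirtinger calculus), the gradient of $f$ with respect to $\mathbf{u}^*$ is $\mathbf{h}\mathbf{h}^H\mathbf{u}/x$, and $\partial f/\partial x=-|\mathbf{h}^H\mathbf{u}|^2/x^2$. The first-order expansion at $(\mathbf{u}^{(t-1)},x^{(t-1)})$ is
\[
f^{(t-1)}+2\mathfrak{Re}\{(\mathbf{u}^{(t-1)})^H\mathbf{h}\mathbf{h}^H(\mathbf{u}-\mathbf{u}^{(t-1)})\}/x^{(t-1)}-|\mathbf{h}^H\mathbf{u}^{(t-1)}|^2(x-x^{(t-1)})/(x^{(t-1)})^2 .
\]
Collecting the constant terms gives
\[
f^{(t-1)}-2f^{(t-1)}+f^{(t-1)}=0,
\]
so the expansion reduces exactly to $\Psi^{(t)}(\mathbf{u},x;\mathbf{h})$. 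This confirms that $\Psi^{(t)}$ is linear (affine) in $(\mathbf{u},x)$.

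\textbf{Step 3: validity of the approximation.} Joint convexity of $f$ gives $f(\mathbf{u},x)\ge\Psi^{(t)}(\mathbf{u},x;\mathbf{h})$ for all feasible points, with equality at the expansion point. Substituting $\Psi^{(t)}$ for the right-hand sides in Step 1 therefore produces \eqref{27} and \eqref{28}, applied with $(\mathbf{u},x)=(\mathbf{p}^c,\rho^c)$ and $(\mathbf{p}_k^p,\rho_k^p)$ respectively. These constraints are convex quadratic, so they are SOC-representable. They are also inner approximations: any point satisfying them satisfies \eqref{29f}--\eqref{29g}, and the previous iterate remains feasible. This is what gives SCA its monotonicity.

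\textbf{Main obstacle.} The delicate step is justifying joint convexity of $|\mathbf{h}^H\mathbf{u}|^2/x$ in a complex vector together with a real scalar. I would handle it by setting $z=\mathbf{h}^H\mathbf{u}$, which is affine in $\mathbf{u}$. Then $|z|^2/x=\big((\mathfrak{Re}\,z)^2+(\mathfrak{Im}\,z)^2\big)/x$ is a sum of two perspective functions of $t^2$, each jointly convex for $x>0$. Composition with an affine map preserves convexity, so $f$ is jointly convex. The same argument also justifies the Wirtinger-gradient form used in Step 2.
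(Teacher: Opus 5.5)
Your proposal is correct and follows the argument the paper defers to in \cite{Hashempour}. You rewrite each SINR constraint as interference-plus-noise $\le |\mathbf{h}^H\mathbf{u}|^2/x$, then replace this jointly convex quadratic-over-linear term by its first-order Taylor expansion; that expansion is a global affine minorant, so it gives an inner (conservative) approximation that is tight at the previous iterate.

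One small quibble: setting $\rho^c=0$ removes only the common-stream part of the secrecy rate, not all of it. The positivity you actually need is just $x^{(t-1)}>0$, so that $\Psi^{(t)}$ is well defined.
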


\begin{proof}
Please refer to \cite{Hashempour}.
\end{proof}
To handle the nonconvexity of constraints \eqref{29j} and \eqref{29k}, we introduce auxiliary variables $\rho_{E}^{c}$ and $\rho^{p,k}_E$
satisfying
\begin{subequations}\label{26}
	\allowdisplaybreaks
         \begin{align}
& \rho_{E}^{c} \geq\max_{j\in\mathcal{J}}\gamma_{j}^{c,e}\\
&f_{E}^{c} \ge\log_2(1+\rho_{E}^{c}),\\&
\rho^{p,k}_E \geq \max_{j\in\mathcal{J}}\gamma_{j,k}^{p,e}\\&
f_{E}^{p,k} \ge\log_2(1+\rho^{p,k}_E).
\end{align} 
\end{subequations}
Subsequently, the affine approximations of the constraints in \eqref{26} are derived as follows
\begin{subequations}\label{31}
	\allowdisplaybreaks
         \begin{align}
    &\frac{| \mathbf{u}_j^H \mathbf{p}^c|^2}{\rho^c_E}-
    \sum_{k}  \Psi^{(t)}(\mathbf{p}^p_k, 1; \mathbf{u}_j) -\sigma^2\leq 0,\label{27a}
    \\
    &\frac{| \mathbf{u}_j^H \mathbf{p}^c|^2}{\rho^{p,k}_E }-\sum\limits_{\ell \neq k}
     \Psi^{(t)}(\mathbf{p}^p_\ell, 1; \mathbf{u}_j) -\Psi^{(t)}(\mathbf{p}^c, 1; \mathbf{u}_j) -\sigma^2\leq 0,
     \\
    & 1 - \Gamma^{(t)}(f_E^c) +\rho_E^c\leq 0,\label{31d}\\
    & 1 - \Gamma^{(t)}(f_E^{p,k}) +\rho_E^{p,k}\leq 0,\label{31e}
\end{align} 
\end{subequations}
where 
$\Psi^{(t)}(\mathbf{p}^p_k, \rho^p_k; \mathbf{v})$ 
and $\Psi^{(t)}(\mathbf{p}^p_k, 1; \mathbf{u}_j)$ are the first-order Taylor approximations of  
$\frac{| \mathbf{v}^H  \mathbf{p}^p_k |^2}{\rho^p_k}$ 
and $|\mathbf{u}_j^H \mathbf{p}_k^p|^2$, respectively. Moreover,
$
\Gamma^{(t)}(x)  \triangleq 2^{x^{(t-1)}} \big[1+\ln(2)\,(x-x^{(t-1)})\big]
$
is the first-order Taylor approximation of $2^x$.

Following the same approach as in Proposition 1, an affine approximation of constraint \eqref{19c} is given by
\begin{subequations}
	\allowdisplaybreaks
\label{32}
\begin{align} 
& r_c \leq \log_2(1 + \rho^c), \\  
& \eqref{27}.
\end{align}
\end{subequations}
For the eavesdroppers, constraint~\eqref{19d} is reformulated as
\begin{subequations}
	\allowdisplaybreaks
\label{eq:common_eve_reform}
\begin{align}
&1 - \Gamma^{(t)}(r_c) + \rho_E^c \leq 0, \label{33a} \\&
\eqref{27a}. \label{33b}
\end{align}
\end{subequations}
Finally, the energy harvesting constraint~\eqref{19e} can be reformulated as
\begin{equation} \label{eq:energy_reform}
\sum_{j \in \mathcal{J}} \Phi^{(t)}(\mathbf{p}^c, \mathbf{u}_j) + \sum_{j \in \mathcal{J}} \sum_{k} \Phi^{(t)}(\mathbf{p}_k^p, \mathbf{u}_j) \geq \Omega^{-1}(E_h),
\end{equation}
where $\Phi^{(t)}(\mathbf{a}, \mathbf{b})$ denotes the first-order Taylor approximation of $|\mathbf{a}^H \mathbf{b}|^2$ around $\mathbf{b}^{(t-1)}$, and is defined as
\begin{equation}
\Phi^{(t)}(\mathbf{a}, \mathbf{b}) = 2 \cdot \mathfrak{Re}\left\{ \mathbf{b}^{(t-1)H} \mathbf{a} \mathbf{a}^H \mathbf{b} \right\} - \left| \mathbf{a}^H \mathbf{b}^{(t-1)} \right|^2.
\end{equation}

With these approximations, the original precoder optimization problem~\eqref{eq:precoder_problem} can be reformulated into a convex problem and solved iteratively, as follows
\begin{align}\label{P2}
        \big( \zeta^{(t)}, \mathbf{P}^{(t)} \big) =
        \arg \max_{\zeta,\,\mathbf{P}}\
         \zeta - \lambda^{(t)}
        \big(
        \text{Tr}(\mathbf{P}\mathbf{P}^H) + P_0
        \big)  \nonumber\\
        \text{s.t. } \eqref{29a}–\eqref{29e},~\eqref{27}–\eqref{eq:energy_reform},~\eqref{19i}.
\end{align}
where $\lambda^{(t)}$ is updated as
\begin{align}\label{lambda_t}
    \lambda^{(t)} = \frac{\zeta^{(t-1)}}{
   \text{Tr}(\mathbf{P}^{(t-1)}\mathbf{P}^{(t-1)H}) + P_0 }.
\end{align}
The overall iterative procedure for solving problem~\eqref{eq:precoder_problem} is summarized in Algorithm~\ref{Alg1}.
\begin{algorithm}[t]
\caption{Iterative Algorithm for Problem~\eqref{eq:precoder_problem}}
\label{Alg1}
\KwIn{Initial values $\zeta^{(0)}$, $\mathbf{P}^{(0)}$, tolerance $\varepsilon$, $t = 1$.}

\Repeat{$\big| \zeta^{(t)} - \zeta^{(t-1)} \big| \leq \varepsilon$}{
    Update $\lambda^{(t)}$ using \eqref{lambda_t};
    
    Solve the convex subproblem \eqref{P2};
    
    Set $t \gets t+1$;
}

\Return $\mathbf{P}^{(t)}$\;
\end{algorithm}

\subsection{RIS Reflection Phase Optimization}

We first note that, given the precoder vectors, the power consumption of the system is fixed. Thus, the fairness energy efficiency maximization problem reduces to a fairness secrecy rate maximization problem. Accordingly, we can reformulate the following subproblem
\begin{align}
\max_{\boldsymbol{\Theta},\zeta} \quad & \zeta\label{eq:irs_opt_obj} \\
\text{s.t.} \quad 
& \text{\eqref{zeta-const.}, \eqref{19c}, \eqref{19d}, \eqref{19e}, \eqref{19k}}. 
\label{36}
\end{align}
For any arbitrary  vectors $\mathbf{v}$ and $\mathbf{w}$ we have: $\mathbf{v}^H \boldsymbol{\Theta}\mathbf{G}_b \mathbf{w} = \mathbf{t}^H \boldsymbol{s}$, where $\mathbf{t} = (\text{diag}(\mathbf{v}^H) \mathbf{G}_b \mathbf{w})^*$.
Therefore, by defining 
$\mathbf{t}_{kc} \triangleq (\text{diag}(\mathbf{g}_k^H) \mathbf{G}_b \mathbf{p}^c)^*$, 
$\mathbf{t}_{jc} \triangleq (\text{diag}(\mathbf{h}_j^H) \mathbf{G}_b \mathbf{p}^c)^*$, 
$\mathbf{t}_{kp} \triangleq (\text{diag}(\mathbf{g}_k^H) \mathbf{G}_b \mathbf{p}_k^p)^*$, 
$\mathbf{t}_{jp} \triangleq (\text{diag}(\mathbf{h}_j^H) \mathbf{G}_b \mathbf{p}_k^p)^*$,
$\mathbf{t}_{kp\ell} \triangleq (\text{diag}(\mathbf{g}_k^H) \mathbf{G}_b \mathbf{p}_\ell^p)^*$, 
$\mathbf{t}_{jp\ell} \triangleq (\text{diag}(\mathbf{h}_j^H) \mathbf{G}_b \mathbf{p}_\ell^p)^*$,
we proceed to convexify the constraints of the RIS phase optimization subproblem. 
According to \eqref{29} we reformulate the constraint \eqref{zeta-const.} using the following proposition
\begin{Proposition}
An affine approximation of constraint \eqref{zeta-const.}, $ \forall k \in \mathcal{K}$ and $ \forall j \in \mathcal{J}$ is given by:
\begin{subequations}\label{44}
	\allowdisplaybreaks
         \begin{align}
         &\sum_{\ell=1}^{K} |\mathbf{t}_{kp\ell}^H \mathbf{s}|^2 + \sigma^2 - \Psi^{(t)}(\mathbf{s}, \rho^c; \mathbf{t}_{kc}) \leq 0, \label{44a}\\
	&\sum\limits_{\ell \neq k} |\mathbf{t}_{kp\ell}^H \mathbf{s}|^2 + \sigma^2 - \Psi^{(t)}(\mathbf{s}, \rho^p_k;\mathbf{t}_{kp}) \leq 0, \label{44b}\\&
   |\mathbf{h}_{bj}^H \mathbf{p}^c + \mathbf{t}_{jc}^H \mathbf{s}|^2 \leq \varTheta^{(t)}( \xi_{jp}, \rho_E^c), \label{44c}\\&
   \xi_{jp} \leq \sum_k \vartheta^{(t)}(\mathbf{h}_{bj}^H \mathbf{p}_k^p, \mathbf{t}_{jp};\mathbf{s}) + \sigma^2,
   \label{44d}\\&
   |\mathbf{h}_{bj}^H \mathbf{p}_k^{p} + \mathbf{t}_{jp}^H \mathbf{s}|^2 \leq \varTheta^{(t)}( \xi_{jc\ell}, \rho^{p,k}_E), \label{44e}\\&
   \xi_{jc\ell} \leq \sum\limits_{\ell \neq k} \vartheta^{(t)}(\mathbf{h}_{bj}^H \mathbf{p}^{p}_\ell, \mathbf{t}_{jp\ell};\mathbf{s}) + \vartheta^{(t)}(\mathbf{h}_{bj}^H \mathbf{p}^c, \mathbf{t}_{jc};\mathbf{s}) + \sigma^2,\label{44f}
   \\&
   \eqref{29a}-\eqref{29e},\eqref{31d},\eqref{31e}. \label{44g}
\end{align} 
\end{subequations}
where we have defined
$\varTheta^{(t)}(x,y) \triangleq 
 \tfrac{1}{2} \big(x^{(t-1)}+y^{(t-1)}\big)(x+y)-\tfrac{1}{4}\big(x^{(t-1)}+y^{(t-1)}\big)^2-\tfrac{1}{4}(x-y)^2,$
for the linear approximation of the terms that involve a product of two variables. Additionally, $\vartheta^{(t)}(c; \mathbf{t}; \mathbf{s})$ is the first-order Taylor approximation of $|c + \mathbf{t}^H \mathbf{s}|^2$ at iteration $t$:
\begin{equation}
\vartheta^{(t)}(c; \mathbf{t}; \mathbf{s}) = 2 \, \mathfrak{Re}\left\{ \big(c + \mathbf{t}^H \mathbf{s}^{(t-1)}\big)^H \mathbf{t}^H \mathbf{s} \right\} - \big|c + \mathbf{t}^H \mathbf{s}^{(t-1)}\big|^2.
\end{equation}
\end{Proposition}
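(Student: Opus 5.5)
The plan is to decompose constraint \eqref{zeta-const.} for fixed $\mathbf{P}$ exactly as in \eqref{29} and \eqref{26}. The only constraints that depend on $\mathbf{s}$ are \eqref{29f}, \eqref{29g} and the two eavesdropper SINR bounds $\rho_E^c\ge\gamma_j^{c,e}$ and $\rho_E^{p,k}\ge\gamma_{j,k}^{p,e}$. The remaining ones, \eqref{29a}--\eqref{29e}, \eqref{31d} and \eqref{31e}, involve only the auxiliary variables, so they can be reused as listed in \eqref{44g}.

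The first step rewrites every channel–precoder product linearly in $\mathbf{s}$ using the identity $\mathbf{v}^H\boldsymbol{\Theta}\mathbf{G}_b\mathbf{w}=\mathbf{t}^H\mathbf{s}$. This gives $\mathbf{g}_k^H\boldsymbol{\Theta}\mathbf{G}_b\mathbf{p}^c=\mathbf{t}_{kc}^H\mathbf{s}$ and $\mathbf{u}_j^H\mathbf{p}_k^p=\mathbf{h}_{bj}^H\mathbf{p}_k^p+\mathbf{t}_{jp}^H\mathbf{s}$, and similarly for the other products.

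For the legitimate users, constraint \eqref{29f} becomes
\[
\sum_\ell|\mathbf{t}_{kp\ell}^H\mathbf{s}|^2+\sigma^2\le |\mathbf{t}_{kc}^H\mathbf{s}|^2/\rho^c .
\]
This has the same structure as Proposition 1, with $\mathbf{s}$ in place of $\mathbf{p}^c$ and $\mathbf{t}_{kc}$ in place of $\mathbf{v}$. The map $(\mathbf{s},x)\mapsto|\mathbf{t}^H\mathbf{s}|^2/x$ is jointly convex (quadratic-over-linear), so its first-order expansion $\Psi^{(t)}(\mathbf{s},\rho^c;\mathbf{t}_{kc})$ is a global under-estimator. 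Replacing the right-hand side by it therefore yields a convex restriction, which is \eqref{44a}. The same argument applied to \eqref{29g} gives \eqref{44b}.

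For the eavesdroppers, I would write $\gamma_j^{c,e}\le\rho_E^c$ as
\[
|\mathbf{h}_{bj}^H\mathbf{p}^c+\mathbf{t}_{jc}^H\mathbf{s}|^2\le \rho_E^c\Big(\sum_k|\mathbf{h}_{bj}^H\mathbf{p}_k^p+\mathbf{t}_{jp}^H\mathbf{s}|^2+\sigma^2\Big),
\]
and then introduce a slack $\xi_{jp}$ that lower-bounds the interference-plus-noise term. Because that term is a convex quadratic in $\mathbf{s}$, its linearization $\vartheta^{(t)}$ is a global under-estimator. Requiring $\xi_{jp}\le\sum_k\vartheta^{(t)}+\sigma^2$ is therefore affine and implies $\xi_{jp}\le$ the true term, which gives \eqref{44d}.

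The product then satisfies $\xi\rho\ge\varTheta^{(t)}(\xi,\rho)$, using the identity
\[
\xi\rho=\tfrac14(\xi+\rho)^2-\tfrac14(\xi-\rho)^2
\]
and linearizing the convex term $\tfrac14(\xi+\rho)^2$ from below. As a result, $|\cdot|^2\le\varTheta^{(t)}(\xi_{jp},\rho_E^c)$ implies the original constraint. This constraint is convex because the left side is convex quadratic and $\varTheta^{(t)}$ is concave, which gives \eqref{44c}. The private-stream eavesdropper constraint is handled identically, with the common stream added to the interference; this gives \eqref{44e}--\eqref{44f}.

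The main point to check is that each replacement is a conservative inner approximation that is tight at the previous iterate $(\mathbf{s}^{(t-1)},\xi^{(t-1)},\rho^{(t-1)})$, since this is what guarantees feasibility and monotonic SCA progress. It is also worth noting that, despite the word ``affine'' in the statement, the resulting constraints are convex rather than affine.
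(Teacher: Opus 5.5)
Your argument is the same SCA route the statement encodes; the paper gives no proof of its own, only a pointer to earlier work. You use tangent minorants of the jointly convex quadratic-over-linear terms for \eqref{44a}--\eqref{44b}, tangent minorants of convex quadratics for the interference slacks, and the difference-of-squares concave minorant $\varTheta^{(t)}$ for the bilinear products. Those steps are correct, and so is your remark that the resulting constraints are convex rather than affine.

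The step that fails as written is your claim that $\vartheta^{(t)}$ is a global under-estimator of $|c+\mathbf{t}^H\mathbf{s}|^2$ that is tight at $\mathbf{s}^{(t-1)}$. This is exactly the check you call ``the main point'' but do not carry out. Set $x_0 \triangleq c+\mathbf{t}^H\mathbf{s}^{(t-1)}$. The true first-order expansion is $2\mathfrak{Re}\{x_0^*(c+\mathbf{t}^H\mathbf{s})\}-|x_0|^2$. The displayed $\vartheta^{(t)}$ has only $\mathbf{t}^H\mathbf{s}$ inside the real part, so it equals the true expansion minus the constant $2\mathfrak{Re}\{x_0^*c\}$. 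At $\mathbf{s}=\mathbf{s}^{(t-1)}$ it therefore evaluates to $|x_0|^2-2\mathfrak{Re}\{x_0^*c\}$ instead of $|x_0|^2$.

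This matters because $c\in\{\mathbf{h}_{bj}^H\mathbf{p}^p_k,\ \mathbf{h}_{bj}^H\mathbf{p}^p_\ell,\ \mathbf{h}_{bj}^H\mathbf{p}^c\}$ is generically nonzero, owing to the direct BS--UEHR link. The effect depends on the sign of $\mathfrak{Re}\{x_0^*c\}$, summed over the terms in \eqref{44d} and \eqref{44f}:
\begin{itemize}
\item If it is negative, the printed formula overestimates the interference-plus-noise near the previous iterate. Then $\xi$ may exceed the true denominator, and \eqref{44c}, \eqref{44e} no longer imply $\gamma_j^{c,e}\le\rho_E^c$ and $\gamma_{j,k}^{p,e}\le\rho_E^{p,k}$, so the approximation is not conservative.
\item If it is positive, the approximation is conservative but not tight. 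The previous iterate need not remain feasible, so monotone SCA progress is lost.
\end{itemize}

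Your proof does establish the proposition once $\vartheta^{(t)}$ is replaced by
$2\mathfrak{Re}\{(c+\mathbf{t}^H\mathbf{s}^{(t-1)})^*(c+\mathbf{t}^H\mathbf{s})\}-|c+\mathbf{t}^H\mathbf{s}^{(t-1)}|^2$.
You should state this correction explicitly rather than rely on the word ``linearization''. With it, every replacement is a tight inner approximation and the rest of your argument stands unchanged.
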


\begin{proof}
Please refer to \cite{Hashempour}.
\end{proof}
From the remaining constraints in problem \eqref{36}, we replace \eqref{19c} with its affine approximation in \eqref{32}. 
Constraint \eqref{19d} is approximated by \eqref{33a}, \eqref{44c}, and \eqref{44d}.
Moreover, \eqref{19e} can be rewritten as
\begin{align}
&\sum_{j \in \mathcal{J}} \vartheta^{(t)}(\mathbf{h}_{bj}^H \mathbf{p}^c; \mathbf{t}_{jc}; \mathbf{s}) + \sum_{j \in \mathcal{J}} \sum_{k} \vartheta^{(t)}(\mathbf{h}_{bj}^H \mathbf{p}_k^p; \mathbf{t}_{jp}; \mathbf{s}) \nonumber \\& \geq \Omega^{-1}(E_h). \label{48}
\end{align}
Constraint \eqref{19k} is equivalent to 
\begin{align}\label{49}
|s_m| = 1, \quad \forall m \in \{1, \dots, M\}.
\end{align}
However, \eqref{49} is still a non-convex constraint. Hence, following
the penalty-based method in \cite{Kumar}, we handle the unit-modulus
constraint and rewrite the objective function as follows
\begin{subequations}\label{50}
	\allowdisplaybreaks
\begin{align}
\max_{\zeta,\, \mathbf{s}} \quad & \zeta + C  \|\mathbf{s}\|^2  \\
\text{s.t.} \quad & \eqref{44}, \eqref{32}, \eqref{33a}, \eqref{44c}, \eqref{44d}, \eqref{48}, \\ &   |\mathbf{s}| \le \mathbf{1},  \label{50c}
\end{align}
\end{subequations}
where $C$ is a large positive constant, $|\mathbf{s}|$ denotes the element-wise magnitude of 
$\mathbf{s}$, $\mathbf{1}$ is a vector of ones, and the inequality in \eqref{50c} is understood entry-wise.
Note that the penalty term enforces $|s_m|^2 = 1$ at the optimal solution.
We further approximate the non-convex part, i.e., the penalty term in the objective of \eqref{50}.
Toward this end, we directly linearize the quadratic term $|s_m|^2$ around the point $s_m^{(t-1)}$ using its first-order Taylor expansion.
Accordingly, the penalty term can be approximated as $C \sum_{m=1}^M \Big( 2\,\mathfrak{Re}\{ (s_m^{(t-1)})^{*} s_m \} - |s_m^{(t-1)}|^2 \Big)$.
Thus, the problem can be rewritten as
\begin{subequations}\label{51}
	\allowdisplaybreaks
\begin{align}
\max_{\zeta,\, \mathbf{s}} \quad & \zeta 
+ C \left( 2\,\mathfrak{Re}\big\{ (\mathbf{s}^{(t-1)})^{H} \mathbf{s} \big\} 
- \|\mathbf{s}^{(t-1)}\|^2 \right) 
\\
\mathrm{s.t.} \quad & \eqref{44}, \eqref{32}, \eqref{33a}, \eqref{44c}, \eqref{44d}, \eqref{48}, \eqref{50c}.
\end{align}
\end{subequations}
Here, $(t-1)$ denotes the previous iteration of the phase optimization subproblem in \eqref{51}.
Finally, the RIS phase optimization can be summarized in Algorithm~\ref{Alg2}.
\begin{algorithm}[t]
\caption{Iterative Algorithm for Phase Optimization Problem}
\label{Alg2}
\KwIn{Initial values $\zeta^{(0)}$, $\boldsymbol{\Theta}^{(0)}$, $\mathbf{s}^{(0)}$, tolerance $\varepsilon$.}
\KwOut{Optimized phase shifts $\boldsymbol{\Theta}^*$.}

Set $t \gets 1$\;

\Repeat{$|\zeta^{(t)} - \zeta^{(t-1)}| \le \epsilon$}{
    Solve the optimization problem in \eqref{51}\;
    $t \gets t+1$\;
}

\Return $\boldsymbol{\Theta}^{(t)}$\;
\end{algorithm}
By sequentially optimizing ($\mathbf{a}, \mathbf{P}, \boldsymbol{\theta}$), the joint optimization of the common message coefficient, BS active beamforming, and RIS phase shifts  can be effectively solved using an AO scheme. The complete procedure is presented in Algorithm \ref{alg:overall}.
\begin{algorithm}[t]
\caption{Proposed Algorithm for Fairness Secrecy Energy Efficiency Maximization}
\label{alg:overall}
\DontPrintSemicolon
\LinesNumbered
\KwIn{Initialize $\mathbf{P}^{(0)},  \boldsymbol{\Theta}^{(0)}, \eta^{(0)}$, tolerance $\varepsilon$ for Algorithm~\ref{Alg1}, iteration index $i \gets 0$, and convergence accuracy $\epsilon$.}
\Repeat{$\big|\eta^{(i)} - \eta^{(i-1)}\big| \le \epsilon$}{
  Obtain message allocation ratios $\mathbf{a}^{(i+1)}$ by solving \eqref{eq:subproblem_allocation} with given $\big(\mathbf{P}^{(i)}, \boldsymbol{\Theta}^{(i)}\big)$. \;
  Obtain RS precoders $\mathbf{P}^{(i+1)}$ via Algorithm~\ref{Alg1} with given $\big(\mathbf{a}^{(i+1)}, \boldsymbol{\Theta}^{(i)}\big)$. \;
  Obtain RIS phase shifts $\boldsymbol{\Theta}^{(i+1)}$ via Algorithm~\ref{Alg2} with given $\big(\mathbf{a}^{(i+1)}, \mathbf{P}^{(i+1)}\big)$. \;
  Compute
  $\eta^{(i+1)} = \dfrac{\zeta^{(i+1)}}{\sum_{k=1}^{K}\|\mathbf{p}_k^{p\, (i+1)}\|^2 + \|\mathbf{p}^{c\, (i+1)}\|^2 + P_0}$. \;
  \If{$\eta^{(i+1)} < \eta^{(i)}$}{
    Set  $\boldsymbol{\Theta}^{(i+1)} \gets \boldsymbol{\Theta}^{(i)}$. \;
  }
  $i \gets i + 1$\;
}
\end{algorithm}

\section{Simulation Results}\label{Simulat}
Unless otherwise stated, the setup is as follows. 
We consider a $1500 \times 3000 \,\text{m}^2$ area with the BS at $(0,0)$ and the UAV at $(1000,0,100)\,\text{m}$. 
The BS employs $N_t=4$ antennas, serving $K=2$ legitimate users and $J=2$ UEHRs, randomly distributed within circles of radius $500$~m centered at $(-1000,1000)$~m and $(-1000,-1000)$~m, respectively.
The UAV carries an RIS with $M=16$ reflecting elements. 
Other parameters are: path-loss exponent $\alpha=2.5$, circuit power $P_0=1$~W, and minimum harvested energy $E_h=0.01$~J. 
Noise power is normalized to 0~dBm and the minimum required transmission rate of the common stream is $r_c=0.5$ Bit/sec/Hz \cite{Hashempour}. 
All results are averaged over 100 channel realizations with convergence accuracy $\epsilon=\varepsilon=0.01$.

Figure~\ref{fig3} compares the SEE versus $N_t$ for the proposed RSMA scheme against SDMA and NOMA baselines, which are implemented as special cases of RSMA following the modeling approach in \cite{Bastami}, with $P_{\max}=10$ dBm.
SEE increases with $N_t$ and then saturates, since extra antennas provide mainly array gains while secrecy rate grows only logarithmically. 
RSMA achieves the highest SEE by flexibly managing interference. 
SDMA slightly outperforms NOMA, as beamforming is more effective than SIC in this scenario.
\begin{figure} 
	\centering
\includegraphics[width=.85\linewidth]{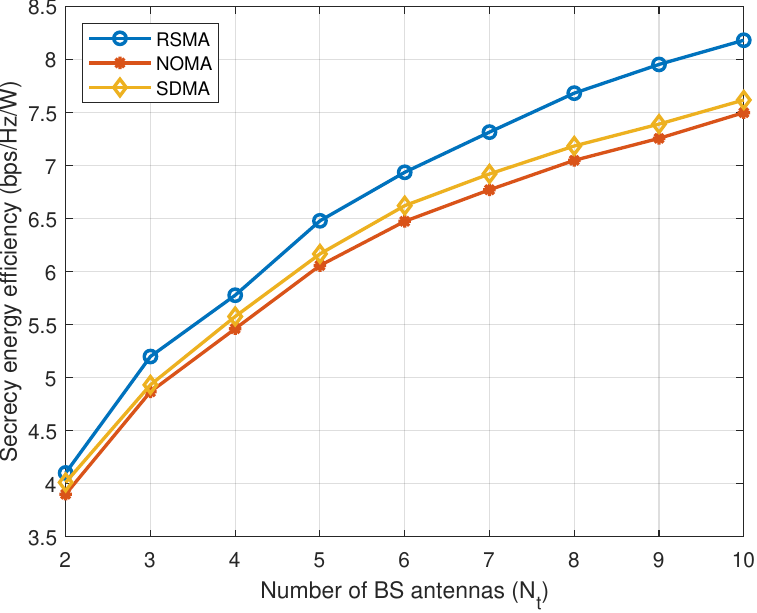}
	\caption{SEE versus $N_t$ for $P_{\max}=10$ dBm and $M=16$ compared with NOMA and SDMA benchmarks.}
	\label{fig3}		
\end{figure}
\begin{figure}[t] 
	\centering
\includegraphics[width=.85\linewidth]{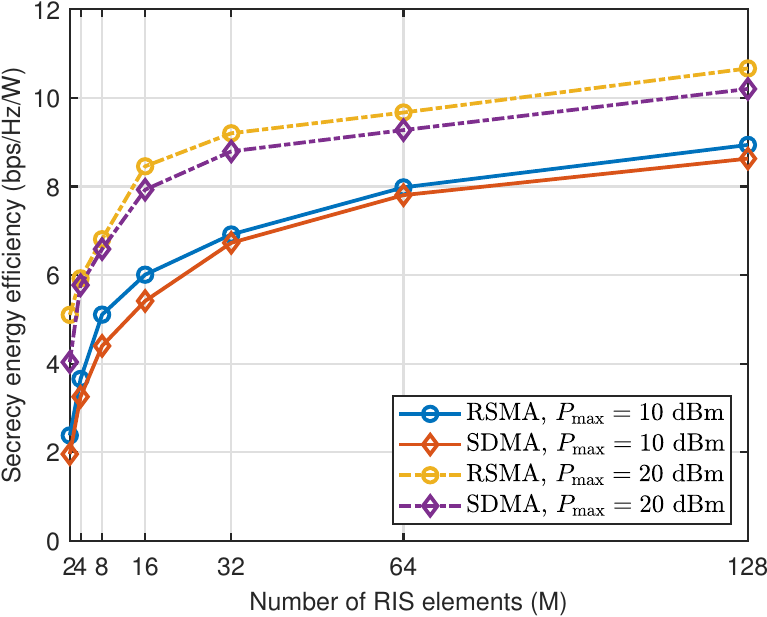}
	\caption{SEE versus $M$ for $N_t=4$ and different schemes.}
	\label{fig4}		
\end{figure}
Figure~\ref{fig4} shows the impact of RIS size $M$. 
Increasing $M$ enhances SEE with a logarithmic trend, and higher $P_{\max}$ further boosts performance. 
Notably, $M=128$ with $P_{\max}=10$ dBm achieves almost the same SEE as $M=16$ with $P_{\max}=20$ dBm, demonstrating the RIS’s energy-saving potential.
\begin{figure}[t] 
	\centering
\includegraphics[width=.85\linewidth]{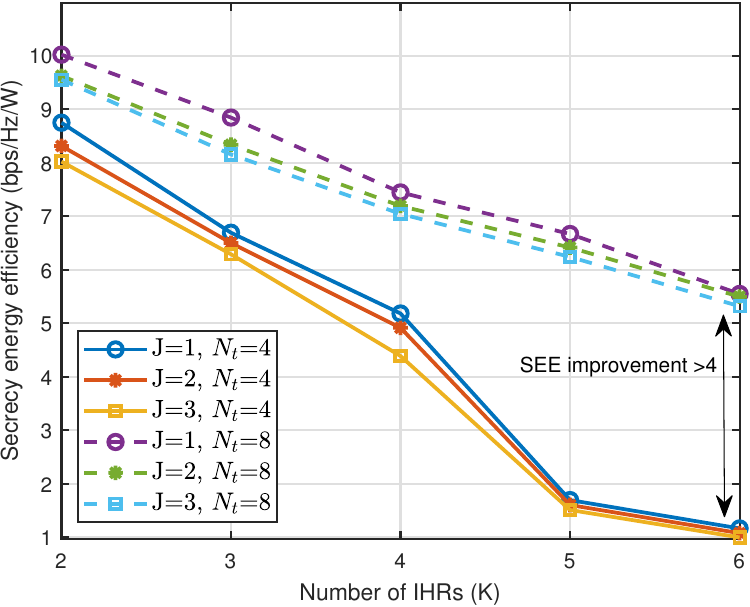}
	\caption{SEE versus number of legitimate users ($K$) and UEHRs ($J$) for $P_{\max}=20$ dBm.}
	\label{fig5}		
\end{figure}
Finally, Fig.~\ref{fig5} evaluates scalability. 
As $K$ increases, SEE decreases significantly due to higher eavesdropping probability, while more UEHRs cause only a slight drop. 
Increasing $N_t$ improves spatial diversity and offsets the degradation. 
For instance, at $K=6$, raising $N_t$ from 4 to 8 yields more than a fourfold gain in SEE.

\section{Conclusions}\label{conc}

We investigated a UAV-assisted RIS-RSMA system for secure and energy-efficient downlink transmission in the presence of UEHRs. 
An AO-based algorithm was proposed to jointly optimize message allocation, active precoders, and RIS phase shifts under power and EH constraints. 
Simulation results verified fast convergence and demonstrated that the proposed scheme achieves higher SEE compared to SDMA and NOMA baselines, highlighting the strong synergy between RSMA and RIS in UAV-assisted secure networks.


\end{document}